\def\R{{\mathbb R}}
\def \i {{\rm i}}
\def \d {{\rm d}}
\def \om {{\omega }}
\def \beq { \begin{equation}}
\def \eeq { \end{equation}}
\def \( { \Bigl( }
\def \) {\Bigr) }
\def \CdUnmasked {{G_d^-}}
\def \CZeroUnmasked {{G^-}}
\def \IndicatorFunction {{\mathbbm{1}}}
\def\Gaussian#1#2{{N^{#1}(#2)}}
\def\GaussianNoParm#1{{N^{#1}}}
\def\sig{{\sigma}}
\newtheorem*{thmnonum}{Theorem}
\newtheorem*{mainthm*}{Main Theorem}
\newtheorem{lemma}{Lemma}[section]
\begin{document}

\title{Spatial coherence of fields from generalized sources in the Fresnel regime}



\author{Andre Beckus}
\affiliation{Department of Electrical and Computer Engineering, University of Central Florida, Orlando, FL 32816, USA}
\author{Alexandru Tamasan}
\affiliation{Department of Mathematics, University of Central Florida, Orlando, FL 32816, USA}
\author{Aristide Dogariu}
\affiliation{CREOL, The College of Optics \& Photonics, University of Central Florida, Orlando, FL 32816, USA}
\author{Ayman F. Abouraddy}
\affiliation{CREOL, The College of Optics \& Photonics, University of Central Florida, Orlando, FL 32816, USA}
\author{George K. Atia}
\affiliation{Department of Electrical and Computer Engineering, University of Central Florida, Orlando, FL 32816, USA}






\begin{abstract}
Analytic expressions of the spatial coherence of partially coherent fields propagating in the Fresnel regime in all but the simplest of scenarios are largely lacking and calculation of the Fresnel transform typically entails tedious numerical integration.
Here, we provide a closed-form approximation formula for the case of a generalized source obtained by modulating the field produced by a Gauss-Shell source model with a piecewise constant transmission function, which may be used to model the field's interaction with objects and apertures. The formula characterizes the coherence function in terms of the coherence of the Gauss-Schell beam propagated in free space and a multiplicative term capturing the interaction with the transmission function. 
This approximation holds in the regime where the intensity width of the beam is much larger than the coherence width under mild assumptions on the modulating transmission function. The formula derived for generalized sources lays the foundation for the study of the inverse problem of scene reconstruction from coherence measurements.  
\end{abstract}

\maketitle

\section{Introduction}

Spatial coherence of optical fields is assessed through cross-correlations between the random field-fluctuations at pairs of points in space. The coherence function $G$ relating two points $x_1,x_2$ in a quasi-monochromatic scalar field is given by $G(x_1,x_2)\!=\!\langle U(x_{1})U^{*}(x_{2})\rangle$, where $\langle\cdot\rangle$ represents an ensemble average, and $U(x)$ is one realization from the ensemble \cite{WOLF19813}. This coherence function can be obtained by various measurement strategies, e.g.,
through the use of double slits \cite{ThompsonPartiallyCoherent,DivittNonParallelSlits,DivittSunlight2015,CoherenceDMD,Kondakci17OE}, non-redundant arrays of apertures \cite{Mejia07OC,Gonzales11JOSAA}, lateral-shearing Sagnac and reversed-wavefront Young interferometers \cite{Iaconis96OL,Cheng00JMO,Santarsiero06OL}, microlens arrays \cite{Stoklasa14NC}, and phase-space methods \cite{Cho12OL,Wood14OL,Sharma16OE}.
The intensity of the field $I$ is subsumed in the coherence function and lies along its diagonal $I(x)\!=\!G(x,x)$. In this sense, the coherence function provides a complete description of a partially coherent field, whereas the intensity alone of course does not. Recently, experiments investigating the effects of one or two simple objects on the coherence functions of partially coherent sources confirmed that the position and size of an object are discernible from the coherence function alone \cite{Kondakci17OE,ElHalawany17}.
However, interpreting these measurements remains a challenge.
This motivates the work of this paper which seeks to develop effective mathematical tools for studying the process of coherence propagation.

Anywhere but in very special cases, the free evolution of coherence functions cannot be obtained analytically in closed form. Even if such a solution is found, once the field scatters off an object, further field evolution can only be evaluated numerically. For example, the generic Gauss-Schell model for a partially coherent field approximates the characteristics of the radiation produced by a wide range of optical sources. Furthermore, such a model admits a tractable analytical treatment of its free evolution \cite{FRIBERG1982383,GORI1983149}, or even for long-range propagation through turbid media as long as no size restrictions are involved \cite{SALEM2003261,Baleine:03}. However, once the intensity profile is modified by passage through a finite aperture (see Section 5.7 of \cite{Goodman:85}), transmittance through a partially transparent medium, or scattering off an object, the subsequent evolution of the coherence function no longer resembles the initial Gauss-Schell model. Instead, calculation of the propagated coherence function is accomplished using a double diffraction integral \cite{OpticalCoherenceBook}, which incurs a high computational cost, and -- crucially -- does not avail a suitable framework for the analysis of the inverse problem from coherence measurements. We call the field produced by such a secondary source, the original coherence function modulated with an arbitrary amplitude profile, a `generalized source'. Generalized sources such defined have bearing on various scenarios of practical importance. Most notably, the transmission function is well-suited to model a beam's interaction with objects or apertures.

There exist techniques that can help reduce the computational complexity, such as accelerating the calculation of the Fresnel integrals through the use of the Fast Fourier Transform \cite{Davis:12}, avoiding full computation of the Fresnel integrals \cite{Gbur2010285}, or exploiting the coherent communication modes of the propagation kernel itself in which the field is expanded \cite{Martinsson:07}. Another strategy involves carrying out a singular expansion of the source in terms of coherent modes to take advantage of the simpler coherent propagation integrals \cite{Gbur2010285,Shirai:03}, but the calculation of the modes is beam-specific \cite{WOLF19813,GORI1983149,Gori:08,Borghi:98,Vahimaa:06} and the number of required modes increases with reduced field coherence \cite{Vahimaa:06}. An altogether different numerical strategy makes use of ray-tracing \cite{Vicent2002101}, which can outperform Fresnel integration by limiting the number of rays \cite{Petruccelli:08}.

In this paper, we obtained a closed-form expression for the spatial coherence function of partially coherent fields propagating from generalized sources in the Fresnel regime, which reduces the computational complexity and affords a favorable ground for the study of inverse problems. We focus on a one-dimensional model in which the field is assumed to vary only along one transverse direction by a piecewise constant transmission function, but the concepts developed herein are naturally extendable to higher dimensions. Our closed-form solution characterizes the coherence from generalized sources in terms of a \textit{conjugated Hilbert transform} \cite{HilbertTransformsBook,EpsteinMedicalImagingBook}, a modified form of the Hilbert transform in which a function is first modulated by a linear phase, transformed, and then modulated by a conjugated phase. Some mild restrictions must be satisfied for this approach to succeed; e.g., the transverse coherence width must be at least one order of magnitude larger than the wavelength, but narrower than features of the transmission function of the generalized source.  A distinguishing feature of our approach is that the parameters of the source appear explicitly in the closed-form expression of the generalized source. For this reason, the results presented set the stage for the inverse problem in which reconstruction of a generalized source is intended from coherence measurements.

The paper is organized as follows. In Section \ref{section:prop_model}, we present the Fresnel propagation in a rotated coordinate system that serves as the basis of our work. Then, in Section \ref{section:prop_complex_source} we formally define generalized sources, provide details on the validity conditions that must be satisfied for our approach to succeed, and present the main theorem. Examples of generalized sources and numerical results obtained using the main theorem are illustrated in Section \ref{section:Examples}.  In the Discussion, we discuss the relevance of the main result to the inverse problem, and highlight possible extensions to the main theorem. Technical details such as the evaluation of the Fourier transform of a truncated Gaussian field and the proof of the main theorem are presented in Appendix \ref{section:math_tools} and Appendix \ref{section:MainResultAppendix}, respectively.

\section{Free Fresnel propagation in a rotated coordinate system}\label{section:prop_model}

Given a planar source located at $z\!=\!0$ with coherence function $G(x_1',x_2')$, after propagating a distance $d$ in the Fresnel regime, the coherence function becomes
\begin{align}\label{fresnel}
\iint G(x_1', x_2') \,
h(x_{1},x_{1}') \,
h^*(x_{2},x_{2}') \,
\;dx_1' \, dx_2',
\end{align}

where the Fresnel propagator $h$ is given by
\begin{align}\label{fresnelkernel}
h(x_{1},x_{1}')=\frac{\exp(ikd)}{\sqrt{i\lambda d}}\exp\left\{i\frac{k}{2d}(x_{1}-x_{1}')^2\right\};
\end{align}
here $\lambda$ is the wavelength and $k$ is the wavenumber.

Because the width of the intensity along the $x_{1}\!=\!x_{2}$ direction is usually significantly larger than the coherence width along the $x_{1}\!=\!-x_{2}$ direction, it is advantageous to work with rotated coordinates
\begin{align} \label{eqn:rotatedcoordinates}
y_1'=\frac{x_1'+x_2'}{2}, \qquad y_2'=\frac{x_1'-x_2'}{2}, 
\end{align}
and similarly for the unprimed coordinates. We refer to $y_1$  and $y_2$ hereon as the {\em intensity} and {\em coherence} coordinates, respectively. These rotated coordinates are illustrated in Fig.~\ref{Fig1:Overview}. Using these coordinates, the Fresnel model of coherence propagation becomes
\begin{align}\label{fresneliny}
G_d(y_1,y_2)=
\frac{1}{2\pi\ell^{2}}
\iint G(y_1', y_2') \, \mathcal{L}(y_1,y_1',y_2,y_2') \, dy_1' \, dy_2',
\end{align}
with $\ell\!=\!\sqrt{d/2k}$ and the kernel is
\begin{align}
\mathcal{L}(y_1,y_1',y_2,y_2') = \exp \left\{i(y_1-y_1')(y_2-y_2')/\ell^{2} \right\}.
\end{align}

\begin{figure}[t!]
\centering
\includegraphics[scale=1]{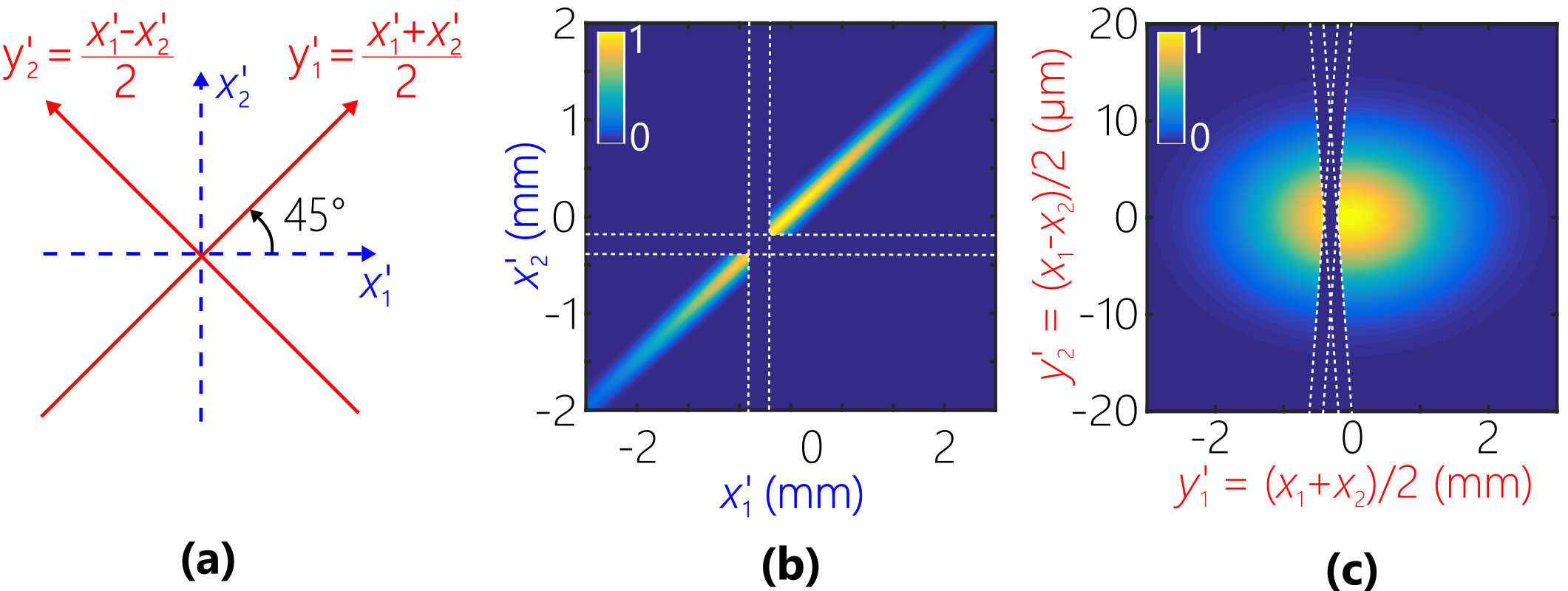}
\caption{(a) Illustration of rotated coordinates. An example of a generalized source is shown in (b) unrotated coordinates and (c) rotated coordinates.  For this example, the Gauss-Schell source parameters are $A\!=\!1$, $w\!=\!1$~mm, $\sigma\!=\!50~\mu$m.  The transmission function is such that
$t(x)\!\!=\!\!0$ for $x \!\in\! [a_1,a_2)$, and $t(x) \!\!=\!\! 1$ otherwise, where $a_1\!=\!-0.4$~mm and $a_2\!=\!-0.2$~mm. Dotted white lines indicate the regions affected by the transmission function.}
\label{Fig1:Overview}
\end{figure}

Underlying the construction of a generalized source in the next Section is a generic partially coherent field described by a Gauss-Schell model \cite{Carter:77}.  Our solution holds in the regime wherein the beam width of the source is assumed to be much larger than the coherence width, thus warranting the quasi-homogeneous approximation and yielding a coherence function $G(y_{1}',y_{2}')\!=\!I(y_1')g(y_2')$; where $I$ and $g$ are separable intensity and coherence functions, respectively \cite{Carter:77,Baleine:04}. Defining a Gaussian function $\Gaussian{\beta}{x}\!=\!\exp\{-x^{2}/2\beta^{2}\}$, then the coherence function of the Gauss-Schell model at $z\!=\!0$ is
\begin{align}\label{eqn:GaussSchell}
\CZeroUnmasked(y_1',y_2')=A_{\mathrm{o}}\, & \Gaussian{w_{\mathrm{o}}}{y_1'} \, \Gaussian{\sigma_{\mathrm{o}}}{y_2'},
\end{align}
where $A_{\mathrm{o}}$ is an amplitude, $w_{\mathrm{o}}$ the width of the intensity profile, and $\sigma_{\mathrm{o}}$ the coherence width of this initial field (all denoted with the subscript 'o'). A useful feature of the Gauss-Schell model is that its structure is propagation-invariant except for an overall phase. Indeed, after propagating a distance $d_{\mathrm{o}}$, Eq.~\ref{eqn:GaussSchell} takes the same form except for a phase factor,
\begin{align}\label{eqn:GaussSchellAfterPropagation}
\CZeroUnmasked(y_{1}',y_{2}')=A \, \exp\{iy_{1}'y_{2}'/R^{2}\} & \Gaussian{w}{y_{1}'} \, \Gaussian{\sigma}{y_{2}'},
\end{align}
where the modified Gauss-Schell parameters $A$, $w$, and $\sigma$, in addition to the new parameter $R$ (the radius of curvature of the quadratic phase), are related to the original parameters $A_{\mathrm{o}}$, $w_{\mathrm{o}}$, and $\sigma_{\mathrm{o}}$ through
\begin{subequations}
\begin{align}
A &=\frac{A_{\mathrm{o}}}{\sqrt{1+\xi_{\mathrm{o}}^{2}}}\,,\\
R &= \ell_{\mathrm{o}}\frac{\sqrt{1+\xi_{\mathrm{o}}^{2}}}{\xi_{\mathrm{o}}}\,, \\
w &=w_{\mathrm{o}}\sqrt{1+\xi_{\mathrm{o}}^{2}}\,\,, \\
\sigma &=\sigma_{\mathrm{o}}\sqrt{1+\xi_{\mathrm{o}}^{2}}\,\,,
\end{align}
\end{subequations}
where $\ell_{\mathrm{o}}\!=\!\sqrt{d_{\mathrm{o}}/2k}$, $\xi_{\mathrm{o}}$ is a unitless quantity given by $\xi_{\mathrm{o}}\!=\!\tfrac{\ell_{\mathrm{o}}^{2}}{w_{\mathrm{o}}\sigma_{\mathrm{o}}}\!=\!\frac{d_{\mathrm{o}}}{z_{\mathrm{GS}}}$, and $z_{\mathrm{GS}}\!=\!4\pi\frac{\sigma_{\mathrm{o}}w_{\mathrm{o}}}{\lambda}$ is an effective Rayleigh range for the Gauss-Schell model
(see \cite{FRIBERG1982383,GORI1983149} for an in-depth discussion of the free space propagation of a Gauss-Schell source).

We take the form in Eq.~\ref{eqn:GaussSchellAfterPropagation} to be the standard GS-model hereon, defined by four parameters $(A,R,w,\sigma)$. Any additional propagation of the GS-field does not change its form. Propagation a distance $d$ produces the same GS-model after transforming the parameters $(A,R,w,\sigma)\!\rightarrow\!(\tilde{A},\tilde{R},\tilde{w},\tilde{\sigma})$, with
\begin{subequations}\label{eqn:doublepropparameters}
\begin{align}
\tilde{A} &= \frac{A}{(1+\delta)\sqrt{1+\xi^{2}}}, \\
\tilde{R} &= R\,\sqrt{\frac{(1+\delta)(1+\xi^2)}{1+(1+\frac{1}{\delta})\xi^2}}, \\
\tilde{w} &= w\,(1+\delta)\sqrt{1+\xi^{2}},\\
\tilde{\sigma} &= \sigma\,(1+\delta)\sqrt{1+\xi^{2}},
\end{align}
\end{subequations}
where $\xi\!=\!\ell^{2}/\{w \sigma(1+\delta)\}\!=\!d/z_{\mathrm{GS}}$, $z_{\mathrm{GS}}\!=\!4\pi\sigma w(1+\delta)/\lambda$ is a scaled Rayleigh range, $\ell\!=\!\sqrt{d/2k}$, and $\delta\!=\!\ell^{2}/R^{2}$. In other words, after propagation in free space a distance $d$, the GS-field coherence becomes
\begin{align}\label{eqn:GaussSchellAfterPropagationDetector}
\widetilde{G}_d(y_1,y_2) = & \widetilde{A} \, \exp\big(iy_{1}y_{2}/\widetilde{R}^{2}\big) \Gaussian{\widetilde{w}}{y_{1}} \, \Gaussian{\widetilde{\sigma}}{y_{2}}.
\end{align}

\section{Propagation of Fields Produced by a Generalized Source}
\label{section:prop_complex_source}

We consider a partially coherent field modulated by a piecewise constant complex transmission function, referred to as a \emph{generalized source}.
The source $\CZeroUnmasked$ is masked by a piecewise constant transmission function $t(x) \in \mathbb{C}$, $|t(x)| \le 1$. Hence, the coherence function of the generalized source is
\begin{align}\label{eqn:complex_source}
G(y_1',y_2') = \CZeroUnmasked(y_1',y_2') \, t(y_1'+y_2') \, t^*(y_1'-y_2').
\end{align}
An example of such a source is shown in Fig. \ref{Fig1:Overview}(b) and (c) in both unrotated and rotated coordinates.

In this paper, we give a closed-form formula for the propagated coherence function of such sources that satisfy the following condition on the intensity and coherence widths
\begin{align}\label{eqn:AlphaSigmaRegime}
w > 10^2 \sigma > 10^3 \lambda .
\end{align}
This relation requires that intensity slowly varies with regard to the coherence width, and the coherence slowly varies with respect to the wavelength.
The transmission function $t$ is segmented into piecewise constant intervals $[a_j,a_{j+1})$,  $j=0,...,N$, where $-\infty=a_0 \!<\! a_1 \!<\! \cdots \!<\! a_{N} \!<\! a_{N+1}=\infty$. The theorem we prove below requires that the breakpoints and intervals satisfy the relations
\begin{subequations}\label{eqn:tRestrictions}
\begin{align}
\sum_{j=1}^{N}  \Gaussian{w}{|a_{j}|-3\sigma} &< 4, \label{eqn:tRestriction1} \\
\min_{j=2,N}(a_{j}-a_{j-1}) &> 3\sigma. \label{eqn:PartitionSize}
\end{align}
\end{subequations}
These relations put a limit on the resolution and number of features present in the transmittance function.
The first relation places a limit on the number of sections located close to the center of the field, while the second relation ensures that none of these sections is too small relative to the coherence width.
While sufficient but not necessary, the constraints in \eqref{eqn:tRestrictions} allow for a wide range of partially coherent sources of practical interest, as will be shown in the examples and numerical results below.

Before we state our main result, recall the definition of the Hilbert Transform of a square integrable function $f$,
\begin{align}\label{eqn:HilbertTransform}
Hf(\omega):=\mathrm{p.v.}{\frac{1}{\pi}}\int\frac{f(s)}{\omega-s}ds,
\end{align}
where the standard notation $\mathrm{p.v.}$ stands for principal value. 
For some real parameter $u$, we also define a conjugated Hilbert transform as
\begin{align}\label{eqn:ConjugatedHilbertTransform}
H^u f(\omega)&:=\exp(-i \omega u) \, H\left\{\exp(isu)\,f(s)\right\}(\omega)\nonumber\\
&=\exp(-i\omega u) \, \mathrm{p.v.}{\frac{1}{\pi}}\int\frac{\exp(isu)\,f(s)}{\omega-s}ds.
\end{align}

We proceed to our main result given in \eqref{eqn:mainresult}, which provides an effective approximation of the coherence function at a given distance from the generalized source. The formula \eqref{eqn:mainresult} characterizes the coherence function $G_d$ in terms of the coherence of the GS-field propagated a distance $d$ in free space and a multiplicative term -- expressed in terms of weighted conjugated Hilbert transforms of a Gaussian -- capturing the modification due to interaction with the transmission function. In obtaining our approximate formula, we consider the individual contributions of the different segments of the transmission function to the total coherence. This in turn yields an approximation to the coherence function based on Fourier transforms of truncated Gaussians giving rise to the conjugated Hilbert transform terms -- a relationship which has not been previously shown.

The technical bounds on the error of our approximate formula are provided in the proof in Appendix \ref{section:MainResultAppendix}. 
\begin{thmnonum}\label{main_result}
Let $\lambda$ be the wavelength, $w$ be the width of the beam intensity profile, and $\sigma$ be the transverse coherence width.
A generalized source as in \eqref{eqn:complex_source} satisfying \textnormal{Eqs.} (\ref{eqn:AlphaSigmaRegime},\ref{eqn:tRestrictions}) is situated at the plane $z\!=\!0$.  At the detection plane $z\!=\!d$, the coherence $G_d(y_1,y_2)$ is well approximated by
\begin{align}\label{eqn:mainresult}
\widetilde{G}_d(y_1,y_2) \frac{i}{2 \Gaussian{\eta \widetilde{\sigma}}{y_2}} \sum_{j=2}^{N} T_{j,j} \left[ \left(H^{b_{j}(y_1)} - H^{b_{j-1}(y_1)} \right)\GaussianNoParm{\widetilde{\sigma} / \eta}\right](y_2)
\end{align}
where $\widetilde{G}_d(y_1,y_2)$ is the coherence of the free 
propagating GS-field in \eqref{eqn:GaussSchellAfterPropagationDetector},
$T_{j,j} = |t(x)|^2$ for $x \in [a_{j-1},a_j)$, and
\begin{subequations}\label{eqn:mainresult_vars}
\begin{align}
\eta &= \sqrt{1+\frac{\sigma^2 \widetilde{\sigma}^2}{\ell^4}}, \\
b_j(y_1) &= \frac{1}{\eta^2 \ell^2} \left( a_j - \frac{y_1}{(1+\delta)(1+\xi^2)} \right).\label{bjs}
\end{align}
\end{subequations}
\end{thmnonum}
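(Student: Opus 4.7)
The plan is to insert the generalized-source coherence \eqref{eqn:complex_source} --- with $\CZeroUnmasked$ taken in the propagated Gauss--Schell form \eqref{eqn:GaussSchellAfterPropagation} --- into the rotated-coordinate Fresnel integral \eqref{fresneliny}, and then exploit the piecewise-constant structure of $t$ to reduce the resulting oscillatory integral to truncated Gaussian Fourier transforms that will be recognized as differences of conjugated Hilbert transforms in the sense of \eqref{eqn:ConjugatedHilbertTransform}.

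First I would change variables to $u=y_1'+y_2'$ and $v=y_1'-y_2'$. In these coordinates the transmission product becomes simply $t(u)\,t^*(v)$, the amplitude Gaussians $\Gaussian{w}{y_1'}\Gaussian{\sigma}{y_2'}$ separate into Gaussians in $u+v$ and $u-v$, and the two quadratic phases --- the Fresnel kernel $\exp\{i(y_1-y_1')(y_2-y_2')/\ell^2\}$ and the GS wavefront phase $\exp\{iy_1'y_2'/R^2\}$ --- combine into a Gaussian-in-$(u,v)$ factor carrying linear phases in $y_1,y_2$. Exploiting the piecewise-constant $t$, the double integral decomposes into $\sum_{j,k} t_j t_k^* \iint_{[a_{j-1},a_j)\times[a_{k-1},a_k)} (\cdots)\, du\, dv$, and splits naturally into diagonal ($j=k$) and off-diagonal ($j\neq k$) contributions.

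The main analytic obstacle is to show that only the diagonal contributions survive in the regime (\ref{eqn:AlphaSigmaRegime},\ref{eqn:tRestrictions}). The narrow-coherence Gaussian $\Gaussian{\sigma}{(u-v)/2}$ decays on the scale $\sigma$, which is strictly smaller than the minimum partition spacing $3\sigma$ guaranteed by \eqref{eqn:PartitionSize}, while the beam-width Gaussian $\Gaussian{w}{(u+v)/2}$ caps the useful support of $u+v$ at scale $w$, so only breakpoints inside the beam are relevant. Each off-diagonal pair $(j,k)$ is then bounded uniformly by a Gaussian tail controlled by a term of the form $\Gaussian{w}{|a_{j}|-3\sigma}$; the hypothesis \eqref{eqn:tRestriction1} is precisely what is needed to sum these tails into a small total error, uniformly in $y_1,y_2$. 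Producing these bounds rigorously --- and tracking the constants that will appear in Appendix \ref{section:MainResultAppendix} --- is the technically most delicate part of the argument.

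For each diagonal term the factored integrand splits into two one-dimensional integrals, each of which is the Fourier transform of a Gaussian truncated to $[a_{j-1},a_j)$ and evaluated at a frequency linear in $y_1,y_2$. Applying the truncated-Gaussian Fourier identity developed in Appendix \ref{section:math_tools} rewrites each such integral as a difference of two boundary contributions indexed by $a_{j-1}$ and $a_j$; after pulling out the modulating exponentials these boundary terms become exactly values of the conjugated Hilbert transform $H^u \GaussianNoParm{\widetilde{\sigma}/\eta}$ from \eqref{eqn:ConjugatedHilbertTransform}, evaluated at the frequencies $b_{j-1}(y_1)$ and $b_j(y_1)$ defined in \eqref{bjs}. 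Completing the square in the remaining quadratic phases in $(y_1,y_2)$ isolates the freely propagated GS factor $\widetilde{G}_d(y_1,y_2)$ of \eqref{eqn:GaussSchellAfterPropagationDetector} together with the normalizing Gaussian $\Gaussian{\eta\widetilde{\sigma}}{y_2}$ in the denominator, while the parameters $\eta$ and $b_j(y_1)$ in \eqref{eqn:mainresult_vars} emerge naturally from the change-of-variables Jacobian, the widths of the residual Gaussians after completing the square, and the frequencies at which the truncated Fourier transforms are evaluated. Summing over $j$ with weights $T_{j,j}=|t_j|^2$ and telescoping the boundary differences reproduces the stated formula \eqref{eqn:mainresult}.
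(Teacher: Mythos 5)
There is a genuine gap, and it sits at the heart of your step 3. After changing to $u=y_1'+y_2'$, $v=y_1'-y_2'$ the Gauss--Schell factor $\Gaussian{w}{y_1'}\Gaussian{\sigma}{y_2'}$ becomes $\exp\{-(u+v)^2/8w^2\}\exp\{-(u-v)^2/8\sigma^2\}$, which carries a cross term $\exp\{\tfrac{uv}{4}(\sigma^{-2}-w^{-2})\}$ with a large coefficient $\approx 1/4\sigma^2$: a partially coherent GS source does \emph{not} factorize as $f(u)g^*(v)$. Hence the integral over a diagonal block $[a_{j-1},a_j)\times[a_{j-1},a_j)$ does not split into "two one-dimensional Fourier transforms of a Gaussian truncated to $[a_{j-1},a_j)$", and even if it did, a product of two truncated transforms would produce, for each $j$, products of boundary terms in both variables (four cross terms), not the single difference $H^{b_j(y_1)}-H^{b_{j-1}(y_1)}$ acting on $\GaussianNoParm{\widetilde\sigma/\eta}$ at the point $y_2$ that appears in \eqref{eqn:mainresult}. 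The paper's route is structurally different: the source is approximated not by the diagonal blocks but by \emph{infinite strips} $S_j^{\pm}$ carrying the averaged diagonal weights $T_{j,j}$ (Lemma \ref{ApproxStrips}; this is also why the prefactor becomes $1/4\pi\ell^2$), and over a strip the inner $y_2'$ integral yields $\Phi^{\sigma}\bigl(\,[y_1-(1+\delta)y_1']/\ell^2,\;y_1'-a_j\bigr)$, in which both the frequency and the truncation point depend on $y_1'$. The indispensable step — entirely absent from your proposal — is the separation approximation $\Phi^{\sigma}(\omega,u)\approx \Gaussian{1/\sigma}{\omega}\,\Phi_0^{\sigma}(u)$ of Lemma \ref{lemma:Phiapprox}, valid under condition \eqref{PhiapproxCondition}, which is where hypothesis \eqref{eqn:AlphaSigmaRegime} actually enters; only then does the remaining outer $y_1'$ integral (a Gaussian times a linear phase times differences of the smoothed steps $\Phi_0^{\sigma}(y_1'-a_j)$) evaluate exactly, via Lemma \ref{lemma_phi_hilbert_relation}, to the conjugated Hilbert transforms with parameters $b_j(y_1)$. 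Your proposal gives no mechanism producing this, so the claimed "recognition" of \eqref{eqn:mainresult} would fail.

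Two smaller inaccuracies: the off-diagonal blocks are not simply "discarded" in the paper — their coefficients are replaced by $\tfrac{T_{j,j}+T_{k,k}}{2}$ so that the sum reassembles into strips, and the error is split into corner triangles controlled by \eqref{eqn:tRestriction1} and the $|y_2'|>3\sigma$ tail controlled by Gaussian decay (your attribution of the tail bound to \eqref{eqn:tRestriction1} conflates these); and the final sum over $j$ does not telescope, since the weights $T_{j,j}$ differ between segments.
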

We note that \eqref{eqn:mainresult} recovers a close approximation to \eqref{eqn:GaussSchellAfterPropagationDetector} for the special case of uniform transmittance.  The difference is due to the finite extent of the source.
While \eqref{eqn:mainresult} shows that the contribution of each segment of the transmission function as in \eqref{eqn:AlphaSigmaRegime} and \eqref{eqn:tRestrictions} is essentially independent, note that we do not assume a priori independence in the contributions of the segments to the coherence.

As shown, information about the transmission function (the breakpoints $a_j$) is explicit in the parameters $b_j$ of the conjugated Hilbert Transform in \eqref{bjs} and the transmission coefficients $T_{j,j}$, wherefore the formula in \eqref{eqn:mainresult} is valuable in the inverse problem of recovering the transmission function from coherence measurements.

\section{Examples of Generalized Sources}\label{section:Examples}

\begin{figure}[t!]
\centering
\includegraphics[scale=1]{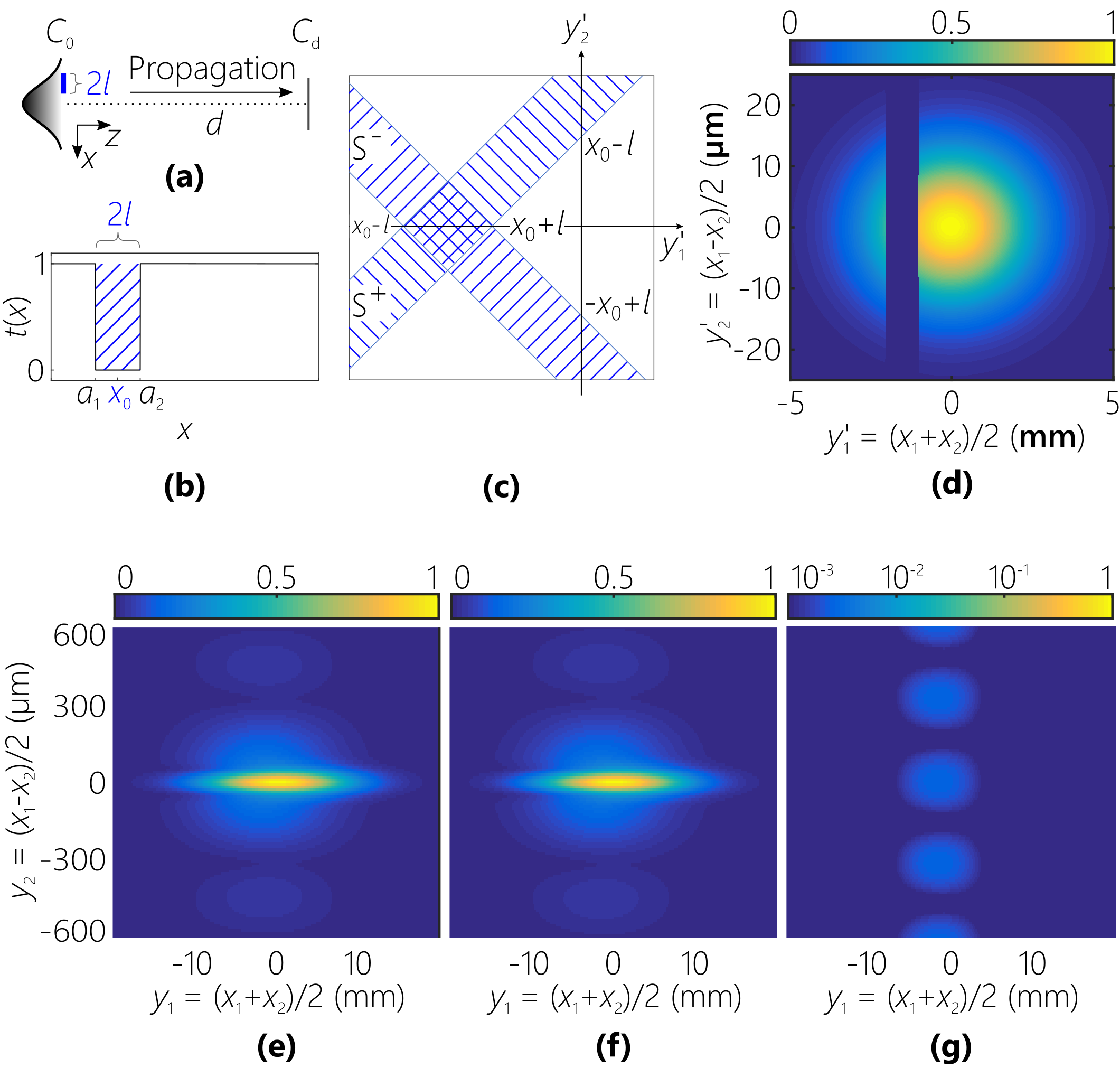}
  \caption{One object example with numerical results showing propagated coherence function in the plane at $z\!=\!1$~m.  (a) Diagram of the scenario.  (b) Transmission function. (c) Transmission function to be applied in coherence space.  The striped regions show the support of the inverted transmission function $1 \!-\! t(y_1' \!+\! y_2') t^*(y_1' \!-\! y_2')$.  (d) Modulus of source coherence function.  (e) Modulus of coherence function obtained using numerical integration of propagation function \eqref{fresneliny}. (f) Modulus of coherence function obtained using approximation \eqref{eqn:mainresult} of the theorem.  (g) Magnitude of error between complex coherences calculated by \eqref{fresneliny} and \eqref{eqn:mainresult}.  All plots are normalized against the maximum value attained in (e) and (f). (g) is plotted on a logarithmic scale to accentuate the small error.   It should be noted that the scale of the $y_2'$ axis is much smaller than the scale of the $y_1'$ axis, and so the ``strips'' mostly overlap in the plotted region.  The parameters for the source Gaussian are $A\!=\!1$, $w\!\approx\!1.7$~mm (yielding an intensity FWHM of 4 mm), $\sigma\!\approx\!8.5~\mu$m (yielding a coherence FWHM of 20 $\mu$m), and the source has no phase (i.e. in the limit as $R \rightarrow \infty$).  The wavelength is $\lambda=632$~nm.  The parameters for the object are $x_0\!=\!-1.5$~mm and $l\!=\!0.5$~mm. }
\label{Fig2:OneObject}
\end{figure}

For clarity of exposition, we analyze first the case of an object comprising a single segment ($N=2$) and then extend this to an example of an arbitrary generalized source.
In this scenario, we assume that a Gauss-Schell model field
exists at $z\!=\!0$.  The source is blocked by a single object centered along the transverse axis at $x\!=\!x_0$ with half-width $l$, and therefore its breakpoints are $a_1 \!=\! x_0\!-\!l$ and $a_2 \!=\! x_0\!+\!l$.  The scenario is depicted in Fig. \ref{Fig2:OneObject}(a) and the transmission function is shown in Fig. \ref{Fig2:OneObject}(b).
For this example, we consider the inversion of the transmission function in the coherence space $1 \!-\! t(y_1' \!+\! y_2') t^*(y_1' \!-\! y_2')$.  The inverted transfer function is chosen so that the coherence function $G$ is supported only on the union $S^+ \cup S^-$ of the ``strips'' shown in Fig. \ref{Fig2:OneObject}(c).
As described in Appendix \ref{section:MainResultAppendix}, these strips directly admit the closed-form solution presented in \eqref{eqn:mainresult}.  The source coherence function $G$ for this example is plotted in \ref{Fig2:OneObject}(d).
Fig. \ref{Fig2:OneObject}(e) shows the function obtained using numerical integration of \eqref{fresneliny}, and Fig. \ref{Fig2:OneObject}(f) shows the approximated results obtained using \eqref{eqn:mainresult}. 
The error in Fig. \ref{Fig2:OneObject}(g), which is plotted on a logarithmic scale, demonstrates good agreement between the exact and approximate equations.

\begin{figure}[t!]
\centering
\includegraphics[scale=1]{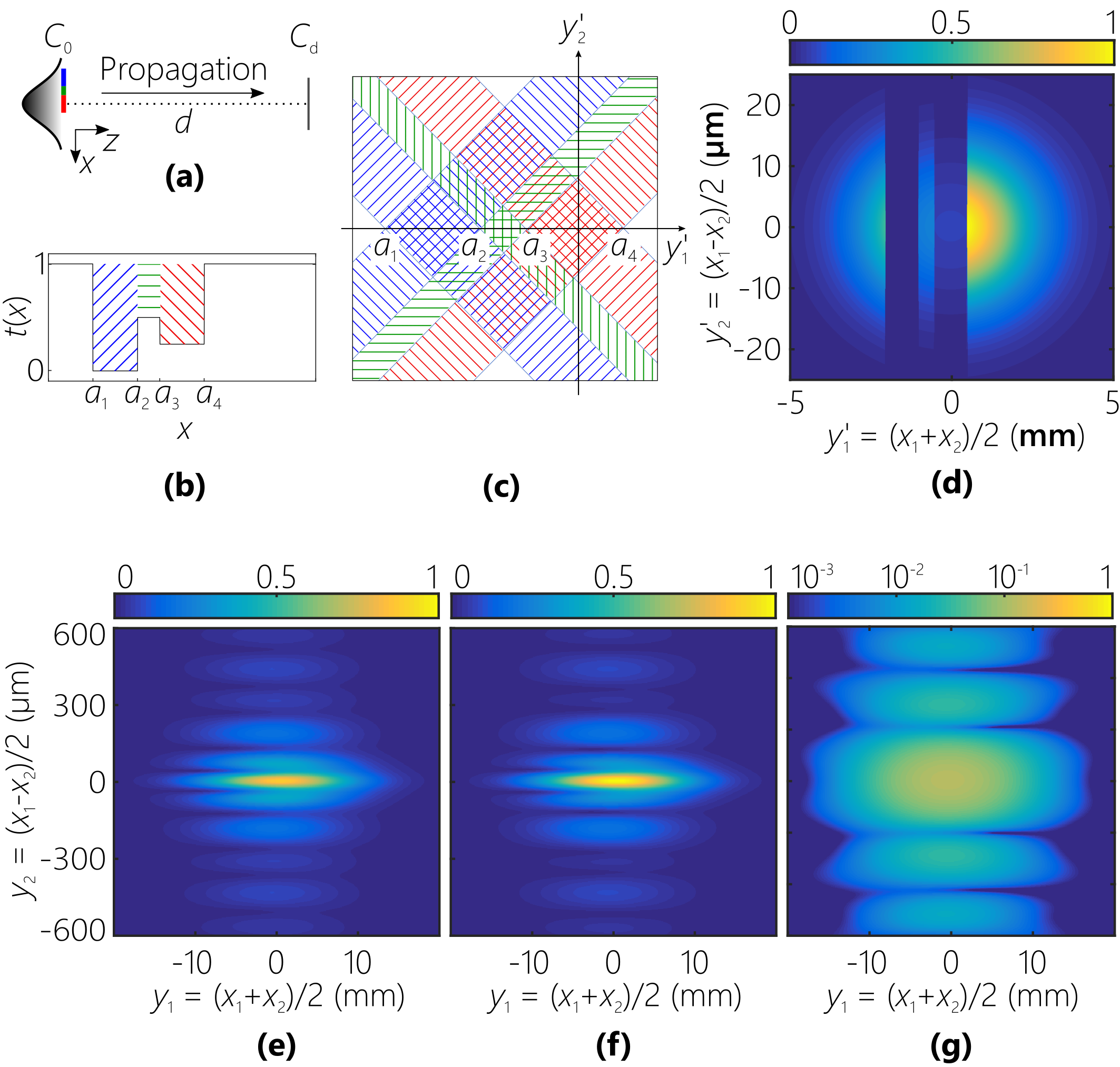}
  \caption{Generalized source example with numerical results showing propagated coherence function in the plane at $z\!=\!1$~m.  (a-g) are the same as in Fig. \ref{Fig2:OneObject}.  As with the one object example, the source parameters are $A\!=\!1$, $w\!\approx\!1.7$~mm, $\sigma\!\approx\!8.5~\mu$m, $\lambda=632$~nm, and no phase.  The breakpoints are at $a_1\!=\!-2$~mm, $a_2\!=\!-1$~mm, $a_3\!=\!-0.5$~mm, and $a_4\!=\!0.5$~mm with transmissions $t\left((-\infty,a_1)\right)\!=\!1$, $t\left([a_1,a_2)\right)\!=\!0$, $t\left([a_2,a_3)\right)\!=\!0.5$, $t\left([a_3,a_4)\right)\!=\!0.25$, and $t(\left[a_4,\infty)\right)\!=\!1$.}
\label{Fig3:GeneralizedSource}
\end{figure}

The next example demonstrates how the one object case naturally extends to more complicated transmission functions.
We will consider a similar scenario as for the previous example, except the transmission function has two additional sections (see Fig. \ref{Fig3:GeneralizedSource}(a) and (b)).
Each piecewise constant section $j$ of the transmission function influences two strip regions
\begin{align}
S_j^{ \pm }&=
\{ (y_1',y_2')\in\R^2,\; a_{j-1} \leq y_1' \mp y_2' \leq a_j \}.
\end{align}
As can be seen in Fig. \ref{Fig3:GeneralizedSource}(c), the interaction between these strips gives rise to $N^2$ piecewise constant sections in the coherence space transmittance. The theorem asserts that the only sections needed to form the approximation are those that fall on the $y_1$ axis.
Because the transmission function for this example is inverted, the true propagated output $G_d$ is given by
\begin{align}
G_d(y_1,y_2) = \CdUnmasked (y_1,y_2) - \overline{G}_d(y_1,y_2)
\end{align}
where $\CdUnmasked$ represents the propagated coherence of the unmasked Gaussian input function and $\overline{G}_d$ the propagated coherence function due to the inverted transmission function.
We show the numerical results in Fig. \ref{Fig3:GeneralizedSource}(e-g).

\begin{figure}[t!]
\centering
\includegraphics[scale=1]{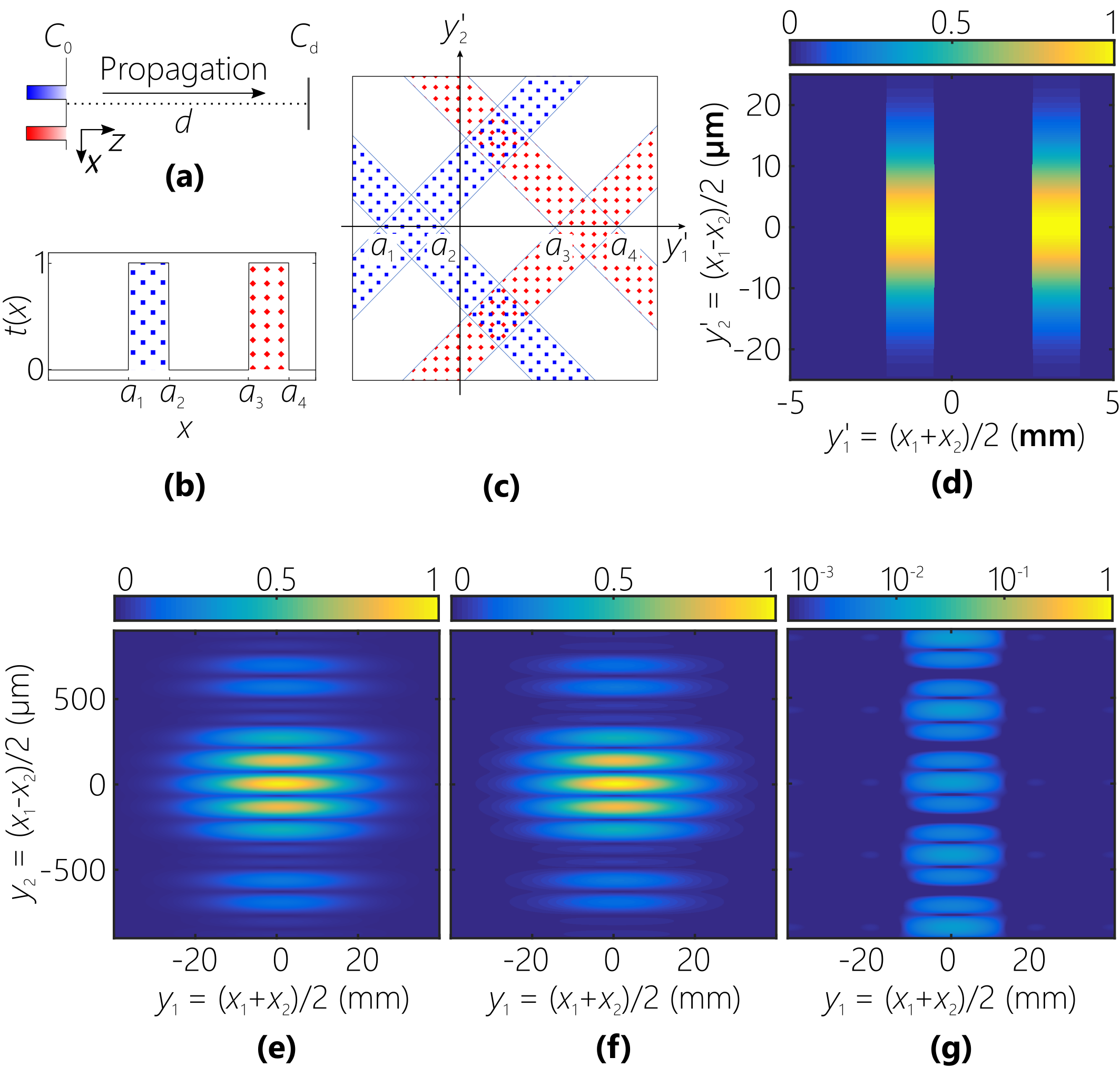}
  \caption{ Uniform source example with numerical results showing propagated coherence function in the plane at $z\!=\!2$~m.  (a-g) are the same as in Fig. \ref{Fig2:OneObject}, except that (c) shows strips due to the non-inverted transmission function $t(y_1' \!+\! y_2') t^*(y_1' \!-\! y_2')$.  The source parameters are $A=1$, $w=1$~m (thus approximating a uniform source), $\sigma\!\approx\!8.5~\mu$m, $\lambda=632$~nm, and no phase.  The breakpoints are at $a_1\!=\!-2$~mm, $a_2\!=\!-0.5$~mm, $a_3\!=\!2.5$~mm, and $a_4\!=\!4$~mm with transmissions $t\left((-\infty,a_1)\right)\!=\!t\left([a_2,a_3)\right)\!=\!t\left([a_4,\infty)\right)\!=\!0$, $t\left([a_1,a_2)\right)\!=\!t\left([a_3,a_4)\right)\!=\!1$.}
\label{Fig4:UniformSource}
\end{figure}

We present a final example demonstrating the approximation of a uniform source by a wide Gaussian (in this case $w\!=\!1$~m).  The source is shown in Fig. \ref{Fig4:UniformSource}.
Unlike the previous two examples, here in the coherence space we use the transmission function $t(y_1' \!+\! y_2') t^*(y_1' \!-\! y_2')$.
The numerical simulation is shown in Fig. \ref{Fig4:UniformSource}(e-g).
The numerical integration and approximated results are in very good agreement with a maximum error of $\approx 0.001$.

\section{Discussion}

We presented a closed-form approximation for the propagation of partially coherent fields emerging from a Gauss-Schell source modulated by a transmittance function.
Examples demonstrated the use of this framework in modeling beams propagating through apertures or past obstructing objects, with numerical results showing only small errors in the approximations.

Our work not only provides an efficient means of propagation in the forward model, but also lays a foundation for approaches to solve the inverse problem. Application  of these results to the inverse problem will be the subject of subsequent work.
For example, the closed-form solution derived in this paper can be used to guide an optimization algorithm in the reconstruction of a source based on measured coherence.
While back propagation of the detected coherence also allows for reconstruction of a scene, it typically requires measuring the entire coherence function \cite{ElHalawany17}. Therefore, our results hold promise to advance compressive methods for reconstructing the source profile using only few measurements, particularly when apriori information regarding the source is available. 
There are several possible directions for future research.  In the two-dimensional coherence model (i.e., one in which the field varies along both transverse axes), a four-fold integration is required to directly evaluate the Fresnel integral.  Our approach is expected to provide considerable computational advantages if extended to this domain.
Relaxing the restriction on the transmission function considered in \eqref{eqn:tRestrictions} would also be useful. Allowing an arbitrary number of short piecewise constant sections could approximate arbitrary smooth transmission profiles. Tightening the error bounds is another possible extension whereby we may be able to dispense with the verifiably conservative assumptions regarding the energy of the source.

\appendix

\section{The Fourier Transform of a Truncated Gaussian}\label{section:math_tools}
The proof of the theorem requires calculation of the Fourier transform of a truncated Gaussian.
In this section we derive the required results.

For $\sigma \!>\! 0$ define the Gaussian function as
$\Gaussian{\sigma}{x}\!=\!\exp\!\left\{-x^2 / 2\sigma^2 \right\}$.
For some $\sigma>0$ and $\omega\in\R$ let us consider the Fourier transform of a truncated Gaussian
\begin{align}\label{phi}
\Phi^\sigma (\omega,u)=\int_{-\infty}^u \exp(-i\omega x) \Gaussian{\sigma}{x} \, dx
\end{align}
and the \emph{cumulative distribution function}
\begin{align}\label{Phi0}
\Phi^\sigma_0(u):=\Phi^\sigma(0,u)=\int_{-\infty}^u \Gaussian{\sigma}{x} \, dx.
\end{align}

We first provide an exact formula for calculating \eqref{phi}.
Recall the Hilbert transform defined in \eqref{eqn:HilbertTransform} and the conjugated Hilbert transform defined in \eqref{eqn:ConjugatedHilbertTransform}.
The following result gives an exact formula for $\Phi$ in terms of the conjugated Hilbert transform.

\begin{lemma}\label{lemma_phi_hilbert_relation}
We have the following formula
\begin{align}\label{exactPhi} 
\Phi^\sigma(\omega,u)= \sqrt{\frac{\pi}{2}}\sigma\left[(I+iH^u)\GaussianNoParm{1 / \sigma}\right](\omega),
\end{align}
where $I$ stands for the identity operator and $H^u$ for the conjugated Hilbert transform in \eqref{eqn:ConjugatedHilbertTransform}. More explicitly,
\begin{align}
\Phi^\sigma(\omega,u) &= \sqrt{\frac{\pi}{2}} \sigma \left(\Gaussian{1 / \sigma}{\omega}
+ \, i \exp(-i\omega u)\frac{1}{\pi}p.v.\int\frac{\exp(isu)\Gaussian{1 / \sigma}{s}}{\omega-s}ds\right).
\end{align}
\end{lemma}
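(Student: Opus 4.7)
The plan is to compute $\Phi^\sigma(\omega,u)$ as the Fourier transform of the product $N^\sigma(x)\,\mathbbm{1}_{(-\infty,u]}(x)$ and extract the conjugated Hilbert transform via the convolution theorem. The crucial algebraic trick is the splitting
\begin{equation*}
\mathbbm{1}_{(-\infty,u]}(x) \;=\; \tfrac12\bigl(1 + \mathrm{sgn}(u-x)\bigr),
\end{equation*}
which decomposes $\Phi^\sigma$ into a symmetric part (the full Gaussian integral) and an antisymmetric part (a truncation-dependent correction).

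First I would dispatch the symmetric piece: $\tfrac12\int e^{-i\omega x}N^\sigma(x)\,dx = \tfrac12\sigma\sqrt{2\pi}\,N^{1/\sigma}(\omega) = \sqrt{\pi/2}\,\sigma\,N^{1/\sigma}(\omega)$, which accounts exactly for the $I$ term in \eqref{exactPhi}. Everything then reduces to identifying $\tfrac12\int e^{-i\omega x} N^\sigma(x)\,\mathrm{sgn}(u-x)\,dx$ with $i\sqrt{\pi/2}\,\sigma\,H^u N^{1/\sigma}(\omega)$.

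For the antisymmetric piece I would invoke the Fourier convolution theorem. A standard tempered-distribution calculation (regularize $\mathrm{sgn}$ by $\pm e^{-\epsilon|x|}$ and pass to $\epsilon\to 0^+$) gives $\mathcal{F}[\mathrm{sgn}(x)](\omega) = -2i\,\mathrm{p.v.}(1/\omega)$, so
\begin{equation*}
\mathcal{F}[\mathrm{sgn}(u-x)](\omega) \;=\; \frac{2i\,e^{-i\omega u}}{\omega}\quad(\text{p.v.}).
\end{equation*}
Using $\mathcal{F}[N^\sigma](\omega) = \sigma\sqrt{2\pi}\,N^{1/\sigma}(\omega)$, the convolution formula $\mathcal{F}[fg] = \tfrac{1}{2\pi}(\mathcal{F}f)*(\mathcal{F}g)$ yields, after one change of variable,
\begin{equation*}
\tfrac12\int e^{-i\omega x} N^\sigma(x)\,\mathrm{sgn}(u-x)\,dx \;=\; \frac{i\sigma}{\sqrt{2\pi}}\,e^{-i\omega u}\,\mathrm{p.v.}\!\int\frac{e^{isu}\,N^{1/\sigma}(s)}{\omega-s}\,ds.
\end{equation*}
Recognizing the right-hand side as $i\sqrt{\pi/2}\,\sigma\,H^u N^{1/\sigma}(\omega)$ via definition \eqref{eqn:ConjugatedHilbertTransform}, and adding the symmetric contribution, completes \eqref{exactPhi}.

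The main obstacle is purely a matter of care rather than depth: the Fourier transform of $\mathrm{sgn}$ is a principal-value distribution, so the convolution must be interpreted in a distributional sense. I would justify the exchange of limit and integration either by introducing the $e^{-\epsilon|x|}$ regulator throughout and passing to the limit using dominated convergence (the Gaussian decay makes this painless), or by noting that $N^{1/\sigma}(s)\,e^{isu}$ is a Schwartz function so the action of the p.v. distribution is unambiguous. A completely independent sanity check, which I would include as a remark, is to differentiate $\Phi^\sigma(\omega,u)$ in $u$: both sides of \eqref{exactPhi} satisfy $\partial_u \Phi^\sigma = e^{-i\omega u} N^\sigma(u)$, and both agree at $u\to -\infty$ (vanishing) and $u\to +\infty$ (the full Gaussian Fourier transform, since $H^u N^{1/\sigma}\to 0$ in that limit by Riemann--Lebesgue applied to the Hilbert kernel of $e^{isu}N^{1/\sigma}(s)$), so the identity holds for every $u$.
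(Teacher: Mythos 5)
Your proposal is correct and follows essentially the same route as the paper: both compute $\Phi^\sigma$ as the Fourier transform of the product $N^\sigma\cdot\IndicatorFunction_{(-\infty,u)}$ via the convolution theorem, with the distributional transform of the half-line indicator supplying the $\pi\delta$ (identity) term and the principal-value term that becomes the conjugated Hilbert transform. Your explicit splitting $\IndicatorFunction_{(-\infty,u]}=\tfrac12(1+\operatorname{sgn}(u-\cdot))$ and the $e^{-\epsilon|x|}$ regularization merely unpack the paper's compact step $\widehat{\IndicatorFunction}_{(-\infty,u)}=\pi[\delta+i\exp(-i(\cdot)u)H]$, and your constants and the final identification with $H^u N^{1/\sigma}$ check out.
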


\begin{proof}
In the following, $\hat{f}$ denotes the Fourier transform of $f$.
\begin{align}
\Phi^\sigma (\omega,u)
& =\int \exp(-i\omega x)\IndicatorFunction_{(-\infty,u)}(x)\Gaussian{\sigma}{x}dx \nonumber\\
& =\widehat{\IndicatorFunction}_{(-\infty,u)}\star\widehat{N}^\sigma (\omega) \nonumber \\
& =\pi\left[\delta+ i \exp(-i(\cdot)u) H\right] \star \sqrt{2\pi}\sigma \Gaussian{1 / \sigma}{\omega} \nonumber \\
& =\sqrt{\frac{\pi}{2}} \sigma \left(\Gaussian{1 / \sigma}{\omega}
+ \, i \exp(-i\omega u){\frac{1}{\pi}}\mathrm{p.v.}\int\frac{\exp(isu)\Gaussian{1 / \sigma}{s}}{\omega-s}ds\right),
\end{align}where 
\begin{equation}
\IndicatorFunction_{(-\infty,u)}(x)=\left\{
\begin{array}{ll}
1,& x<u,\\
0&x>u.
\end{array}\right.
\end{equation}
In the third equality the Fourier transform is understood in the sense of (temperate)  distribution.
\end{proof}

We remark that from the properties of the Hilbert transform, it can be seen that the conjugated Hilbert transform obeys the inversion law
$
-(H^u)^2 f(x) = f(x).
$

Apart from the exact formula \eqref{exactPhi},
we are interested in an approximation with a form easier to handle analytically.
We now give an approximation formula together with the estimate in the error.

\begin{lemma}\label{lemma:Phiapprox}

\begin{align}\label{Phiexpanded}
&\Phi^\sigma(\om,u)
= \Gaussian{1/\sigma}{\om} \Phi^\sigma_0(u) \left[1 + (\i \sigma^2 \omega) \exp(-iu\om)\frac{\Gaussian{\sigma}{u}}{\Phi^\sigma_0(u)}R(\omega,u,\sigma)\right],
\end{align}
where 
\begin{align}
R(\omega,u,\sigma)=\int_0^1\exp(-iu\omega x)
\exp\!\left( \frac{\sigma^2\omega^2}{2}x^2 \right) dx
\end{align}
In particular, we have
\begin{align}\label{Phiapprox}
&\Phi^\sigma(\om,u) \approx \Gaussian{1/\sigma}{\om} \Phi^\sigma_0(u)
\end{align}
provided
\begin{align}\label{PhiapproxCondition}
\left| \sigma^2 \omega \frac{\Gaussian{\sigma}{u}}{\Phi_0^\sigma(u)} \right| \ll 1.
\end{align}
\end{lemma}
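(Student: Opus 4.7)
The plan is to exhibit $\Phi^\sigma(\omega,u)$ as the solution of a first-order linear ODE in the parameter $\omega$ with initial data at $\omega=0$, solve it explicitly by an integrating factor, and then read off the representation \eqref{Phiexpanded} after rescaling the dummy variable of integration to the interval $[0,1]$. The approximation \eqref{Phiapprox} will then follow from a pointwise bound on the correction factor together with the hypothesis \eqref{PhiapproxCondition}.

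First I would differentiate \eqref{phi} under the integral sign to write $\partial_\omega \Phi^\sigma(\omega,u) = -i\int_{-\infty}^u x\, e^{-i\omega x}\, \Gaussian{\sigma}{x}\, dx$, then use the elementary identity $x\,\Gaussian{\sigma}{x}=-\sigma^2 \tfrac{d}{dx}\Gaussian{\sigma}{x}$ and integrate by parts (the boundary term at $-\infty$ vanishes by Gaussian decay) to obtain the linear ODE
\[
\partial_\omega \Phi^\sigma(\omega,u) \;+\; \sigma^2\omega\, \Phi^\sigma(\omega,u) \;=\; i\sigma^2\, e^{-iu\omega}\, \Gaussian{\sigma}{u},
\]
with initial datum $\Phi^\sigma(0,u)=\Phi_0^\sigma(u)$. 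Multiplying by the integrating factor $e^{\sigma^2\omega^2/2}=1/\Gaussian{1/\sigma}{\omega}$ and integrating on $[0,\omega]$ gives
\[
\Phi^\sigma(\omega,u) \;=\; \Gaussian{1/\sigma}{\omega}\,\Phi_0^\sigma(u) \;+\; i\sigma^2\, \Gaussian{1/\sigma}{\omega}\, \Gaussian{\sigma}{u} \int_0^\omega e^{\sigma^2 t^2/2-iut}\, dt .
\]
A rescaling of the dummy variable to $[0,1]$ converts the remaining integral into $\omega\, R(\omega,u,\sigma)$ in the normalization of the lemma, and factoring $\Phi_0^\sigma(u)$ out of the brackets delivers \eqref{Phiexpanded}. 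For \eqref{Phiapprox} I would then use the crude estimate $|R(\omega,u,\sigma)|\le \int_0^1 e^{\sigma^2\omega^2 x^2/2}\,dx$, which is of order unity whenever $|\sigma\omega|$ is moderate, so that the bracketed correction in \eqref{Phiexpanded} differs from $1$ by a quantity directly controlled by \eqref{PhiapproxCondition}.

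The main subtlety I anticipate is not the calculation itself --- which is elementary --- but ensuring that the correction is genuinely small in the parameter range in which the lemma will be applied by the main theorem. The Mills-ratio-type quantity $\Gaussian{\sigma}{u}/\Phi_0^\sigma(u)$ grows like $|u|/\sigma^2$ as $u\to -\infty$, so the hypothesis \eqref{PhiapproxCondition} is sharp and cannot be replaced by the simpler condition $|\sigma^2\omega|\ll 1$ alone. In Appendix \ref{section:MainResultAppendix} the lemma will be invoked only for values of $u$ (breakpoints $a_j$ of the transmission function) that sit within a few coherence widths of the beam's intensity envelope, which is exactly the regime enforced by the first restriction in \eqref{eqn:tRestrictions}; there the ratio remains $O(1)$ and the approximation is quantitatively strong.
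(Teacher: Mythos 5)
Your proposal is correct and follows essentially the same route as the paper's own proof: differentiating $\Phi^\sigma(\omega,u)$ under the integral, integrating by parts to obtain the linear ODE $\partial_\omega\Phi^\sigma+\sigma^2\omega\,\Phi^\sigma=i\sigma^2 e^{-iu\omega}\Gaussian{\sigma}{u}$, solving with the integrating factor $e^{\sigma^2\omega^2/2}$ and rescaling the integration variable to $[0,1]$, then bounding $|R|\le\int_0^1 e^{\sigma^2\omega^2x^2/2}dx$ so that \eqref{PhiapproxCondition} controls the correction. Your closing remark about where the lemma is applied (the Mills-ratio factor staying $O(1)$ at the breakpoints) is a sensible addition consistent with how the theorem's hypotheses are used, and the only cosmetic discrepancy is the overall unimodular phase $\exp(-iu\omega)$ appearing in \eqref{Phiexpanded}, which does not affect the estimate.
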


\begin{proof}

We will show that $\omega\mapsto \Phi^\sigma(\omega,u)$  satisfies the linear differential equation
\begin{align}
\frac{d\Phi^\sigma(\omega,u)}{d\om}+\sigma^2\om\Phi^\sigma(\omega,u)= \i \sigma^2 \exp(-\i u \om) \Gaussian{\sigma}{u}.
\end{align}
Indeed, by differentiation,
\begin{align}
\frac{d\Phi^\sigma}{d\om}
&= \i \sigma^2 \int_{-\infty}^u \exp(-\i x \om) \, {\frac{d}{dx}}\!\left[ \exp\!\left(-\frac{x^2}{2 \sigma^2}\right) \right] \d x \nonumber \\
&= \i \sigma^2 \exp\!\left(-\i x \om\right) \exp\!\left( -\frac{x^2}{2 \sigma^2} \right)
\bigg\rvert^{x=u}_{x \rightarrow -\infty} \nonumber \\
&\qquad - \i \sigma^2 (- \i \om) \int_{-\infty}^u \exp(-\i x \om) \exp\!\left(-\frac{x^2}{2 \sigma^2}\right) \d x \nonumber \\
&= \i \sigma^2 \exp(-\i u \om) \Gaussian{\sigma}{u} - \sigma^2 \om \Phi^\sigma (\om, u) \d x.
\end{align}
Using the integrating factor $\exp\!\left(\sigma^2 \om^2 / 2\right)$, an integration from $0$ to $\sigma$, and a scaling by a factor of $\sigma$ in the ensuing integral, the formula \eqref{Phiexpanded} is obtained.
We estimate $R$ as follows
\begin{align}
\left| R(\om,\sigma,u)\right|
&\leq \int_{0}^{1}\exp\!\left(\frac{\sigma^2 \om^2}{2}x^2\right) \d x \nonumber \\
&\leq \int_{0}^{1}\exp\!\left(\frac{\sigma^2 \om^2}{2}x\right) \d x \nonumber \\
&=\frac{2}{\sigma^2 \omega^2}\left[\exp\!\left(\frac{\sigma^2 \om^2}{2}\right)-1\right],
\end{align}
from which the estimate \eqref{Phiapprox} and condition \eqref{PhiapproxCondition} follow.

\end{proof}

\section{Proof of Theorem}\label{section:MainResultAppendix}

\begin{figure}[t!]
\centering
\includegraphics[scale=1]{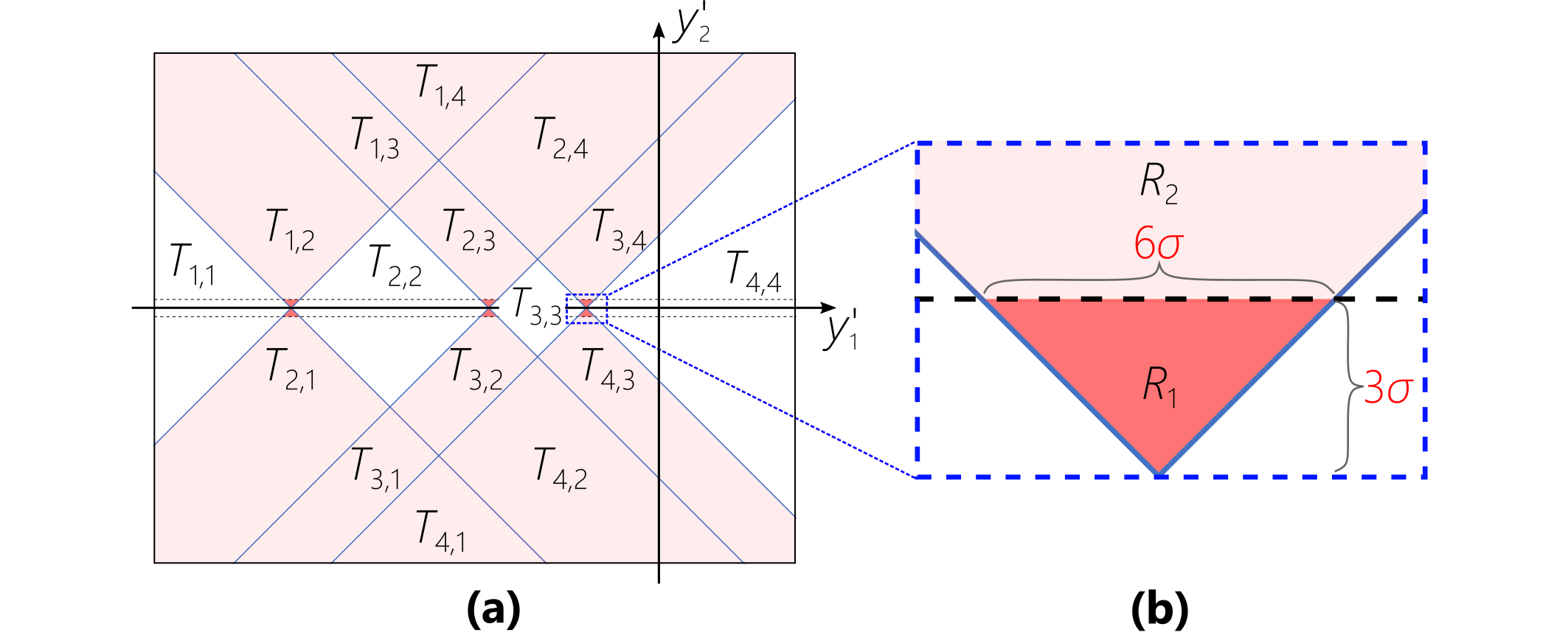}
  \caption{Regions of approximation for Lemma \ref{ApproxStrips}.  (a) shows the coefficients associated with each piecewise constant section.  The region of approximation $R_1$ is shaded dark red, while the region $R_2$ is shaded light red.  (b) provides a detailed view of one of the triangle regions making up region $R_1$. }
\label{Fig5:ApproxRegions}
\end{figure}

Next, we provide a proof for the theorem stated in Section \ref{section:prop_complex_source}.
The transmission function in the coherence space is
\begin{align}\label{eqn:BetaCoverOfT}
t(x_1') t^*(x_2') &= \sum_{j,k=1}^{N} T_{j,k} \,\, \IndicatorFunction_{B_{j,k}}\!(x_1',x_2')
\end{align}
where $\IndicatorFunction$ denotes the indicator function and $T_{j,k}$ denotes the transmissivity coefficient within region
\begin{align}
B_{j,k} = [a_{j-1},a_j) \times [a_{k-1},a_k), \quad 1 \le j,k \le  N+1.
\end{align}
The coefficients are shown in Fig. \ref{Fig5:ApproxRegions}.
As with the examples in Section \ref{section:Examples}, without loss of generality we may also use the transmission function $1 \!-\! t(x_1') t^*(x_2')$.
In terms of the unmasked Gaussian beam $\CZeroUnmasked(y_1',y_2')=A \, \exp\{iy_{1}'y_{2}'/R^{2}\} \, \Gaussian{w}{y_1'} \, \Gaussian{\sigma}{y_2'}$, the propagated coherence function is
\begin{align}\label{CoherenceOutput}
G_d(y_1,y_2)
&= \frac{1}{2 \pi \ell^2} \sum_{j,k=1}^{N+1} T_{j,k} \iint_{B_{j,k}} \CZeroUnmasked(y_1', y_2')\,  \mathcal{L}(y_1,y_1',y_2,y_2') \,\, dy_1' \,\, dy_2'.
\end{align}
We first apply Lemma \ref{ApproxStrips}, which allows the source coherence to be approximated by a series of infinite strips.
\begin{lemma}\label{ApproxStrips}Assume the hypotheses in the theorem.
The propagated coherence function \eqref{CoherenceOutput} can be approximated by
\begin{align}\label{CoherenceOutputApproxStrips}
G_d(y_1,y_2)
&\approx \frac{1}{4 \pi \ell^2} \sum_{j = 2}^{N} T_{j,j}
\left( \iint_{S_j^{+}} \!\CZeroUnmasked(y_1', y_2') \, \mathcal{L}(y_1,y_1',y_2,y_2') \, \, dy_1' \, dy_2' \right. \nonumber \\
&\qquad\qquad\qquad\quad\quad + \left. \iint_{S_j^{-}} \!\CZeroUnmasked(y_1', y_2') \, \mathcal{L}(y_1,y_1',y_2,y_2') \, \, dy_1' \, dy_2' \right)
\end{align}
where we use the two strip regions $S_{j}^+ \!\!=\!\! \bigcup_{k = 1}^{N+1} B_{j,k}$ and $S_{j}^- \!\!=\!\! \bigcup_{k = 1}^{N+1} B_{k,j}$.
Moreover, the magnitude of the pointwise error in this approximation is bounded from above by $\left[ 40\sigma \!+\! 4 \sqrt{2\pi} w \Phi_0^1(-3) \right] A\sigma$.
\end{lemma}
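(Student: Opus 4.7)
The plan is to bound the pointwise difference $\mathcal{E}(y_1,y_2) := G_d(y_1,y_2) - \widetilde{G}_d^{\text{approx}}(y_1,y_2)$, where $\widetilde{G}_d^{\text{approx}}$ denotes the right-hand side of \eqref{CoherenceOutputApproxStrips}, by re-expressing $\mathcal{E}$ as a sum of integrals over off-diagonal boxes $B_{j,k}$ ($j\neq k$) plus two unapproximated semi-infinite corner boxes $B_{1,1}$ and $B_{N+1,N+1}$, and then exploiting the Gaussian decay of $G^-$ (scale $\sigma$ in $y_2'$, scale $w$ in $y_1'$) against the partition scale $>3\sigma$ imposed by \eqref{eqn:PartitionSize}. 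First I would use $S_j^+=\bigcup_k B_{j,k}$ and $S_j^-=\bigcup_k B_{k,j}$ to expand the approximation into a ``correct'' diagonal piece $\tfrac{1}{2\pi\ell^2}\sum_{j=2}^{N} T_{j,j}\iint_{B_{j,j}} G^-\mathcal{L}$ plus an off-diagonal piece $\tfrac{1}{4\pi\ell^2}\sum_{j=2}^{N}\sum_{k\neq j} T_{j,j}(\iint_{B_{j,k}}+\iint_{B_{k,j}}) G^-\mathcal{L}$. Subtracting from \eqref{CoherenceOutput}, the interior diagonal terms cancel and what remains are off-diagonal discrepancies $(T_{j,k}-\tfrac{1}{2}T_{j,j})\iint_{B_{j,k}} G^-\mathcal{L}$ together with the two unapproximated corner contributions. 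Using $|T_{j,k}|\le 1$, $|\mathcal{L}|=1$, and $|G^-|=A\,\Gaussian{w}{y_1'}\Gaussian{\sigma}{y_2'}$, I obtain a pointwise bound on $|\mathcal{E}|$ as a $(y_1,y_2)$-independent sum of integrals of $\Gaussian{w}{y_1'}\Gaussian{\sigma}{y_2'}$ over the error regions.

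Second, I would estimate these integrals by splitting each domain through the central strip $\{|y_2'|<3\sigma\}$ and its complement -- the dark and light red regions $R_1$ and $R_2$ of Fig.~\ref{Fig5:ApproxRegions}. The $45^{\circ}$ orientation of the rotated boxes combined with \eqref{eqn:PartitionSize} forces $B_{j,k}\cap\{|y_2'|<3\sigma\}=\emptyset$ whenever $|j-k|\ge 3$, so only boxes with $|j-k|\in\{1,2\}$ enter the central strip, and each such intersection is a triangle of area $O(\sigma^2)$ localized near $y_1'\approx a_j$. On these triangles I bound $\Gaussian{w}{y_1'}\le \Gaussian{w}{|a_j|-3\sigma}$, and summing over $j$ via hypothesis \eqref{eqn:tRestriction1} (which caps the total at $4$) delivers the $40\sigma\cdot A\sigma$ piece of the claimed bound. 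On the complementary region $\{|y_2'|>3\sigma\}$ the Gaussian tail estimate $\int_{|y_2'|>3\sigma}\Gaussian{\sigma}{y_2'}\,dy_2'\le 2\sqrt{2\pi}\,\sigma\,\Phi_0^1(-3)$ combined with $\int\Gaussian{w}{y_1'}\,dy_1'=\sqrt{2\pi}\,w$ produces the $4\sqrt{2\pi}\,w\,\Phi_0^1(-3)\cdot A\sigma$ piece. The two semi-infinite corner boxes $B_{1,1}$ and $B_{N+1,N+1}$ are handled analogously: their integrals are dominated by the $y_1'$-tail of $\Gaussian{w}{\cdot}$, which is again absorbed by the $j=1$ and $j=N$ terms of \eqref{eqn:tRestriction1}.

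The hardest step will be the geometric bookkeeping in the central strip: precisely identifying which off-diagonal boxes touch it, quantifying the area and location of each triangular intersection, and verifying that the weighted breakpoint sum in \eqref{eqn:tRestriction1} is exactly the quantity needed so that the per-breakpoint contributions, once summed and multiplied by the $\le 4$ bound from \eqref{eqn:tRestriction1}, produce the constants $40$ and $4$ independently of $N$. Once this accounting is in place, combining the central-strip estimate (controlled by \eqref{eqn:tRestriction1}) with the $y_2'$-tail estimate (controlled by $\Phi_0^1(-3)$) yields the stated pointwise bound $[40\sigma+4\sqrt{2\pi}\,w\,\Phi_0^1(-3)]A\sigma$.
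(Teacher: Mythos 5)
Your proposal is correct and follows essentially the same route as the paper: compare the strip sum with the true box sum (equivalently, replace $T_{j,k}$ by $\tfrac{1}{2}(T_{j,j}+T_{k,k})$ on $B_{j,k}$), bound the pointwise error by the $L^1$ norm of the discrepancy using $|\mathcal{L}|=1$, and split the error region into the breakpoint triangles inside $\{|y_2'|\le 3\sigma\}$ (controlled by \eqref{eqn:tRestriction1}, yielding $40A\sigma^2$) and the $y_2'$-tail (yielding $4\sqrt{2\pi}\,A w\sigma\,\Phi_0^1(-3)$), exactly as in the paper's $R_1/R_2$ decomposition. The only nitpick is that the per-box discrepancy should read $T_{j,k}-\tfrac{1}{2}(T_{j,j}+T_{k,k})$ rather than $T_{j,k}-\tfrac{1}{2}T_{j,j}$, a bookkeeping slip that does not change the estimate (the paper's own accounting is similarly loose).
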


\begin{proof}
We will assume for our source function $G$, that $T_{1,1} \!=\! T_{1,N\!+\!1} \!=\! T_{N\!+\!1,1} \!=\! T_{N\!+\!1,N\!+\!1} \!=\! 0$.  Since the source is Gaussian, an appropriate truncation (at $|y_1'| = 3w$ for example) will result in only a small error.

We start by approximating the source coherence by
\begin{align}\label{eqn:SourceApprox}
G'(y_1',y_2')
= \sum_{j,k=2}^{N} \frac{T_{j,j}+T_{k,k}}{2} \, \IndicatorFunction_{B_{j,k}}\!(y_1'+y_2',y_1'-y_2') \, \CZeroUnmasked(y_1',y_2')
\end{align}
Then we have
\begin{align}\label{eqn:SourceApproxTerms}
G(y_1',y_2') = G'(y_1',y_2')
+ R(y_1',y_2')
\end{align}
where
\begin{align}
R(y_1',y_2')
&=\sum_{\substack{j,k = 2 \\ j \ne k}}^{N} \left( \frac{T_{j,k} + T_{k,j}}{2} - T_{j,j} \right)
\CZeroUnmasked(y_1', y_2') \nonumber \\
& = R_1(y_1',y_2') + R_2(y_1',y_2').
\end{align}
Error terms $R_1$ and $R_2$ arise from different regions of the source as illustrated in Fig. \ref{Fig5:ApproxRegions}.

The term $R_1$ comes from integration over the region $\bigcup_{j,k = 2}^{N}\left( B_{j,k}\bigcap \{ |y_2'| \le 3\sigma \}\right) $ of small triangles as in Fig. \ref{Fig5:ApproxRegions}(b).
The coherence function at $z\!=\!d$ due to this term can be bounded by
\begin{align}
|R_{1,d}(y_1,y_2)| &= \left| \iint R_1(y_1',y_2') \mathcal{L}(y_1,y_1',y_2,y_2') \, dy_1' \, dy_2' \right| \nonumber \\
&\le \iint |R_1(y_1',y_2')| \, dy_1' \, dy_2' \nonumber \\
&\le A \sum_{j=1}^N \left| T_{j,j+1} + T_{j+1,j} - T_{j,j} - T_{j+1,j+1} \right|  \nonumber \\
&\qquad \times \sum_{k=0}^{11} \int_{\frac{k}{4}\sigma}^{\frac{k+1}{4}\sigma} \int_{a_j-\frac{k+1}{4}\sigma}^{a_j+\frac{(k+1)}{4}\sigma} \Gaussian{w}{y_1'} \Gaussian{\sigma}{y_2'} \,\, dy_1' \,\, dy_2' \nonumber \\
&\le \frac{5}{2} A\sigma^2 \sum_{j=1}^N \left| T_{j,j+1} \!+\! T_{j+1,j} \!-\! T_{j,j} \!-\! T_{j+1,j+1} \right| \nonumber \\
&\qquad \times \Gaussian{w}{|a_j|\!-\!3\sigma} \nonumber \\
&\leq 40 A \sigma^2,\label{eqn:R1}
\end{align}
where the last inequality uses the hypothesis \eqref{eqn:tRestriction1}.

The term $R_2$ comes from the region $\bigcup_{j,k = 2}^{N}\left[ \left( B_{j,k} \cup B_{k,j} \right) \bigcap \{ |y_2'| > 3\sigma \} \right]$.  Making use of the inequality $\left| \frac{T_{j,k} + T_{k,j}}{2} - T_{j,j} \right| \le 2$ for any $j,k$, we have
\begin{align}
|R_{2,d}(y_1,y_2)| &= \left| \iint R_2(y_1',y_2') \mathcal{L}(y_1,y_1',y_2,y_2') \, dy_1' \, dy_2' \right| \nonumber \\
&\le \iint |R_2(y_1',y_2')| \, dy_1' \, dy_2' \nonumber \\
&\le 4 A \int_{3 \sigma}^{\infty} \int_{-\infty}^{\infty} \Gaussian{w}{y_1'} \Gaussian{\sigma}{y_2'} \, dy_1' \, dy_2' \nonumber \\
&= 4 \sqrt{2\pi} A w \int_{3\sigma}^{\infty} \Gaussian{\sigma}{y_2'} \, dy_2' \nonumber \\
&= 4 \sqrt{2\pi} A w \sigma \, \Phi_0^1(-3)
\end{align}
If $w$ is large and the transmission function $t$ is zero outside the interval $[a_1, a_N)$, then we may instead bound the error $R_2$ by
\begin{align}
|R_{2,d}(y_1,y_2)|
&\le 4 A (a_N-a_1) \int_{3 \sigma}^{\infty} \Gaussian{\sigma}{y_2'} \, dy_2' \nonumber \\
&= 4 \sqrt{2\pi} A (a_N-a_1) \sigma \, \Phi_0^1(-3).
\end{align}

\end{proof}

Finally, the following lemma can be applied to reduce the propagation integrals over the strips to a closed-form.  Since the Fresnel approximation is assumed, in the following proof we use the fact that $y_1,y_2 \!\ll\! d$, and that the source must be concentrated about the origin in the $y_1',y_2'$-plane.

\begin{lemma}\label{mainlemma}
The contribution to the detected coherence due to the $j$-th strips can be approximated as
\begin{align}
G_{d,j}^{\pm}(y_1,y_2) &= \frac{1}{2 \pi \ell^2} \iint_{S_j^{\pm}} \CZeroUnmasked (y_1', y_2') \, \mathcal{L}(y_1,y_1',y_2,y_2') \, dy_1' \, dy_2' \nonumber \\
&\approx \widetilde{G}_d(y_1,y_2) \, \frac{i}{2 \Gaussian{\eta \widetilde{\sigma}}{y_2}} \nonumber\\
&\qquad\times\left[ \pm \left( H^{\pm b_{j}(y_1)} - H^{\pm b_{j-1}(y_1)} \right)N^{\widetilde{\sigma} / \eta} \right](y_2),\label{D1}
\end{align}
where $\widetilde{G}_d$ is defined in \eqref{eqn:GaussSchellAfterPropagationDetector}, and the variables $\widetilde{\sigma}$, $\eta$, and $b_j$ are as defined in Eqs. (\ref{eqn:doublepropparameters},\ref{eqn:mainresult_vars}).
\end{lemma}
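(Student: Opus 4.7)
The plan is to rotate to coordinates that align each strip $S_j^{\pm}$ with a coordinate axis, integrate out the unconstrained direction by completing the square, and then identify the surviving truncated integral as the $\Phi^\sigma(\omega,u)$ of Lemma~\ref{lemma_phi_hilbert_relation}; the exact formula \eqref{exactPhi} then delivers the conjugated Hilbert transforms in \eqref{D1}. For $S_j^+$ I would substitute $s = y_1' - y_2' \in [a_{j-1}, a_j]$ and $v = y_1' \in \mathbb{R}$, so that $y_2' = v-s$ and the Jacobian is $1$; for $S_j^-$ I take $s = y_1' + y_2'$ and $v = y_1'$. The choice $v = y_1'$ is deliberate: it avoids any $sv$ cross-term from the wide Gaussian $N^w(y_1')=N^w(v)$, so that the only $sv$ coupling arises from $N^\sigma(y_2')=N^\sigma(v\mp s)$ and from the two quadratic phases $\exp(iy_1'y_2'/R^2)$ and $\exp(i(y_1-y_1')(y_2-y_2')/\ell^2)$.

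\textbf{The $v$ integral.} After the substitution the integrand becomes $\exp[Q(s,v;y_1,y_2)]$ with $Q$ a quadratic form in $(s,v)$. The coefficient of $v^2$ is $-\tfrac{1}{2w^2} - \tfrac{1}{2\sigma^2} + i\bigl(\tfrac{1}{\ell^2}+\tfrac{1}{R^2}\bigr)$, whose real part is strictly negative, so completing the square in $v$ and integrating over $\mathbb{R}$ yields a closed-form Gaussian factor. When the resulting algebra is regrouped, the $v$-independent prefactor is precisely $\widetilde{G}_d(y_1,y_2)/N^{\eta\widetilde{\sigma}}(y_2)$, with $(\widetilde{A},\widetilde{R},\widetilde{w},\widetilde{\sigma})$ from \eqref{eqn:doublepropparameters} and $\eta$ from \eqref{eqn:mainresult_vars}. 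What survives inside the integral over $s \in [a_{j-1},a_j]$ is, up to a $y_2$-linear phase, a Gaussian of width $\widetilde{\sigma}/\eta$ centered at the $y_1$-dependent shift $s_0(y_1) = y_1/[(1+\delta)(1+\xi^2)]$.

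\textbf{The truncated $s$ integral.} Shifting $s \mapsto s + s_0(y_1)$ brings the residual one-dimensional integral into the form $\Phi^{\widetilde{\sigma}/\eta}(\omega, a_j - s_0) - \Phi^{\widetilde{\sigma}/\eta}(\omega, a_{j-1} - s_0)$ for a frequency $\omega(y_1,y_2)$ proportional to $y_2$. The identity pieces in the exact formula \eqref{exactPhi} are independent of the upper limit and therefore cancel in this difference, leaving $i\sqrt{\pi/2}\,(\widetilde{\sigma}/\eta)\bigl[H^{a_j - s_0} - H^{a_{j-1} - s_0}\bigr] N^{\eta/\widetilde{\sigma}}(\omega)$. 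The proportionality constant relating $\omega$ to $y_2$, together with the rescaling that converts $N^{\eta/\widetilde{\sigma}}(\omega)$ into $N^{\widetilde{\sigma}/\eta}(y_2)$, forces the identification $b_j(y_1) = (a_j - s_0(y_1))/(\eta^2\ell^2)$, which matches \eqref{bjs}; the outer $\pm$ in $\pm(H^{\pm b_j} - H^{\pm b_{j-1}})$ is inherited from whether one chose $s = y_1' \mp y_2'$ at the start. Reassembling the $v$-prefactor with this Hilbert-transform difference produces exactly \eqref{D1}.

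\textbf{Main obstacle.} The conceptual machinery is standard (completing the square, then Lemma~\ref{lemma_phi_hilbert_relation}); the real hurdle is the algebraic book-keeping in the $v$-integration. I must carry the source quadratic phase $iy_1'y_2'/R^2$, the Fresnel phase $i(y_1-y_1')(y_2-y_2')/\ell^2$, and the anisotropic Gaussian $N^w(y_1')N^\sigma(y_2')$ through the change of variables simultaneously, and verify that the complex coefficients emerging from the Gaussian completion collapse precisely onto the parameters $(\widetilde{A},\widetilde{R},\widetilde{w},\widetilde{\sigma},\eta,b_j,s_0)$ prescribed in \eqref{eqn:doublepropparameters}--\eqref{eqn:mainresult_vars}. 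A subsidiary subtlety is that the effective $s$-width $\widetilde{\sigma}/\eta$ and the $\omega$-to-$y_2$ proportionality must cooperate, via $\eta = \sqrt{1 + \sigma^2\widetilde{\sigma}^2/\ell^4}$, to yield the tidy prefactor $1/N^{\eta\widetilde{\sigma}}(y_2)$ of \eqref{D1} rather than a more cluttered exponential; this is the step at which the approximation sign in \eqref{D1} absorbs any small residual phase terms, all controlled by the Fresnel regime and by the hypothesis $w > 10^2\sigma$ already used in Lemma~\ref{ApproxStrips}.
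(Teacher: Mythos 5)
Your route is genuinely different from the paper's: you rotate to strip-aligned coordinates $s=y_1'\mp y_2'$, $v=y_1'$, integrate out $v$ exactly, and then try to read the truncated $s$-integral directly through the exact formula \eqref{exactPhi}, whereas the paper integrates the coherence coordinate $y_2'$ first, obtains an exact difference $\Phi^{\sigma}(\omega(y_1'),y_1'-a_{j-1})-\Phi^{\sigma}(\omega(y_1'),y_1'-a_j)$, applies the approximation of Lemma \ref{lemma:Phiapprox}, and only then reduces the remaining $y_1'$-integral exactly to the form \eqref{exactPhi}. The difficulty is that your version contains no approximation step at all, and the place where you omit it is exactly where the identification fails. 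Do the $v$-integration explicitly: with $Q=(1+\delta)/\ell^{2}$ the exponent is $-a v^{2}+b(s)v+c(s)$ with $a=\tfrac{1}{2w^{2}}+\tfrac{1}{2\sigma^{2}}-iQ$, $b(s)=(\tfrac{1}{\sigma^{2}}-iQ)s-i(y_1+y_2)/\ell^{2}$ and $c(s)=-\tfrac{s^{2}}{2\sigma^{2}}+i\tfrac{y_1y_2}{\ell^{2}}+i\tfrac{y_1 s}{\ell^{2}}$, so after the Gaussian $v$-integral the coefficient of $s^{2}$ is $\tfrac{1}{4a}\bigl(\tfrac{1}{\sigma^2}-iQ\bigr)^{2}-\tfrac{1}{2\sigma^{2}}=-\bigl(Q^{2}+\tfrac{1}{\sigma^{2}w^{2}}\bigr)/(4a)$, which is \emph{complex} because $a$ is. The surviving $s$-integrand is therefore a chirped Gaussian, and its linear-in-$s$ coefficient contains a $y_1$-dependent real part in addition to the $y_2$-proportional phase; it is not of the form $\exp(-i\omega s)N^{\beta}(s)$ with real $\beta$, so Lemma \ref{lemma_phi_hilbert_relation} cannot be applied to it directly and the cancellation of the identity pieces you invoke is not yet available. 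Discarding the chirp and the residual $y_1$-dependent terms is precisely the content of the ``$\approx$'' in \eqref{D1}; in the paper this is done via \eqref{Phiapprox} under the quantitative condition \eqref{PhiapproxCondition}, which is where the hypotheses \eqref{eqn:AlphaSigmaRegime} and \eqref{eqn:tRestrictions} enter. Your sketch neither states the analogous smallness condition nor bounds the resulting error, so the central step of the lemma is missing rather than merely deferred to bookkeeping.

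Two of your intermediate claims are also wrong as stated, though repairable. First, the real-space width of the Gaussian factor in $s$ is of order $\eta^{3}\ell^{2}/\widetilde{\sigma}$ (millimetres for the parameters of Fig.~\ref{Fig2:OneObject}), not $\widetilde{\sigma}/\eta$ (tens of microns); the Gaussian $N^{\widetilde{\sigma}/\eta}$ evaluated at $y_2$ appears only after the rescaling $x=(s-c_1)/(\eta^{2}\ell^{2})$, which is also what produces the parameters $b_j$ of \eqref{bjs}. Second, the $v$-independent prefactor is $\sqrt{\pi/a}\,\exp\{-(y_1+y_2)^{2}/(4a\ell^{4})+iy_1y_2/\ell^{2}\}$, which is \emph{not} precisely $\widetilde{G}_d(y_1,y_2)/N^{\eta\widetilde{\sigma}}(y_2)$: its $y_1$-width reproduces only the $\ell^{4}/\sigma^{2}$ part of $\widetilde{w}^{2}=w^{2}(1+\delta)^{2}(1+\xi^{2})$ and it carries a real $y_1y_2$ cross term absent from \eqref{eqn:GaussSchellAfterPropagationDetector}; the missing $y_1$-decay and the correct phase structure emerge only after the $y_1$-dependent centre and phase of the $s$-integral are folded back in. Any completed write-up along your lines must therefore track these contributions explicitly and show that the leftover factors are exactly the small terms absorbed by the approximation, with an error estimate comparable to the one the paper obtains through Lemmas \ref{lemma:Phiapprox} and \ref{ApproxStrips}.
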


\begin{proof}
We perform the integration over the strip
\begin{align}
S_j^{+} =
\{ (y_1',y_2')\in\R^2,\; a_{j-1} \leq y_1' - y_2' < a_j \}.
\end{align}
The calculation over the  strip
\begin{align}
S_j^{-} =
\{ (y_1',y_2')\in\R^2,\; a_{j-1} \leq y_1' + y_2' < a_j \}
\end{align}
follows similarly and is not detailed.

\begin{align}
G_{d,j}^{+}(y_1,y_2)
&= \frac{1}{2 \pi \ell^2} \iint_{S_j^{+}} \CZeroUnmasked (y_1',y_2') \, \mathcal{L}(y_1,y_1',y_2,y_2') \, \d y_1' \, \d y_2' \nonumber \\
&= \frac{1}{2 \pi \ell^2} A
\exp(\i y_1 y_2 / \ell^2) \int \exp(-\i y_1' y_2 / \ell^2) \Gaussian{w}{y_1'} \nonumber \\
&\qquad\times \int_{y_1'-a_j}^{y_1'-a_{j-1}} \!\!\!\!\exp\!\left(- \i y_2' \left[ y_1 \!-\! (1+\delta) y_1' \right] / \ell^2 \right) \Gaussian{\sig}{y_2'} \, \d y_2' \, \d y_1' \nonumber \\
&= \frac{1}{2 \pi \ell^2} A \exp(\i y_1 y_2 / \ell^2)\int  \exp(-\i y_1' y_2 / \ell^2) \Gaussian{w}{y_1'} \nonumber \\
 &\qquad\times\left[  \Phi^{\sig} \! \left(\frac{y_1 - \left(1+\delta \right) y_1'}{\ell^2}, y_1'-a_{j-1} \right) \right. \nonumber \\
 & \left. \qquad\qquad\qquad - \, \Phi^{\sig} \! \left( \frac{y_1 - \left( 1+\delta \right) y_1'}{\ell^2}, y_1'-a_j \right)  \right]\d y_1' \nonumber \\ 
&\approx \frac{1}{2 \pi \ell^2}A \exp(\i y_1 y_2 / \ell^2)\int \exp(-\i y_1' y_2 / \ell^2) \nonumber \\
&\qquad\qquad\qquad\times \Gaussian{w}{y_1'} \Gaussian{\ell^2 / \sig(1+\delta)}{y_1' - \frac{y_1}{1+\delta}} \nonumber \\
&\qquad\qquad\qquad\times\left[  \Phi_0^{\sig}(y_1'-a_{j-1})
 - \Phi_0^{\sig} (y_1'-a_j)  \right]\d y_1' \nonumber \\
&=\frac{1}{2 \pi \ell^2} A \exp(\i y_1 y_2 / \ell^2) \Gaussian{\widetilde{w}}{y_1} \nonumber \\
&\qquad\qquad\qquad\times \int \exp(-\i y_1' y_2 / \ell^2)  \Gaussian{\ell^2 / \widetilde{\sigma}}{y_1' - c_1} \nonumber \\
&\qquad\qquad\qquad\times \left[ \Phi_0^{\sig}(y_1'-a_{j-1})
 - \Phi_0^{\sig} (y_1'-a_j)  \right]\d y_1',
\end{align}
where
\begin{subequations}
\begin{align}
{c_1} &= \frac{y_1}{(1+\delta)(1 + \xi^2)}.
\end{align}
\end{subequations}
Due to the nature of the inner integral of the second equality, the outer integral is effectively truncated such that $-3\sigma \!<\! y_1'-a_{j-1} \!<\! y_1'-a_j \!<\! 3\sigma$.  Therefore, with the hypotheses \eqref{eqn:AlphaSigmaRegime} and Eqs. (\ref{eqn:tRestrictions}), the approximation formula \eqref{Phiapprox} applies since
\begin{align}
\frac{\sigma^2}{\ell^2} \left[ y_1 - (1+\delta) y_1' \right] \frac{\Gaussian{\sigma}{y_1'-a_k}}{\phi_0^\sigma(y_1'-a_k)}  \ll 1 \:
\end{align}
for $k\!=\!j\!-\!1,j$.  Substituting $y_1' = y_1'' + c_1$, we continue
\begin{align}
G_{d,j}^{+}(y_1,y_2)
&\approx\frac{1}{2 \pi \ell^2} A \exp(\i y_1 y_2 / \ell^2) \Gaussian{\widetilde{w}}{y_1} \nonumber \\
&\qquad\times \int \exp\!\left( -\i y_2 (y_1'' + c_1)/\ell^2\right) \Gaussian{\ell^2 / \widetilde{\sigma}}{y_1''} \nonumber \\
&\qquad\times \left[  \Phi_0^{\sig}(y_1''+(c_1-a_{j-1})) \right. \nonumber \\
&\left. \qquad\qquad\qquad - \Phi_0^{\sig} (y_1''+(c_1-a_j))  \right]\d y_1'' \nonumber\\
&=\widetilde{G}_d(y_1,y_2) \frac{\widetilde{\sigma} }{\sqrt{2\pi} \eta \, \Gaussian{\eta \widetilde{\sigma}}{y_2}} \nonumber \\
&\qquad\times
\left[ \Phi^{\eta/\widetilde{\sigma}} \! \left(y_2,\frac{a_{j}-c_1}{\eta^2 \ell^2}\right) - \Phi^{\eta/\widetilde{\sigma}} \! \left(y_2,\frac{a_{j-1}-c_1}{\eta^2 \ell^2}\right)  \right]
\nonumber \\
&= \widetilde{G}_d(y_1,y_2) \frac{i}{2 \Gaussian{\eta \widetilde{\sigma}}{y_2}} \nonumber \\
&\qquad\times \left[ \left(H^{b_{j}(y_1)} - H^{b_{j-1}(y_1)} \right)\GaussianNoParm{\widetilde{\sigma} / \eta}\right](y_2).
\end{align}

\end{proof}

\section*{Funding.}
Defense Advanced Research Projects Agency (DARPA), Defense Science Office under contract HR0011-16-C-0029.

\bibliographystyle{IEEEtran}
\bibliography{main}

\end{document}